\newtheorem{lemma}{Lemma}
\newtheorem{remark}{Remark}
\newtheorem{assumption}{Assumption}
\newtheorem{corollary}{Corollary}
\newtheorem{definition}{Definition}
\newtheorem{problem}{Problem}
\newcommand{\B}{\mathcal{B}} 
\newcommand{\N}{\mathcal{N}} 
\newcommand{\TimeA}{\mathbb{T}^A_i}
\newcommand{\TimeS}{\mathbb{T}^S_i}
\newcommand{\Ser}{\mathbb{S}}
\newcommand{\timeS}{\tau}
\newcommand{\timeA}{\mathit{t}}
\newcommand{\Act}{\mathbb{A}}
\newcommand{\traj}{\mathbf{x}}
\newcommand{\trace}{\mathit{trace}}
\newcommand{\leader}{\ell}
\newcommand{\plan}{\Omega}
\newcommand{\urge}{\Upsilon}
\newcommand{\last}{\lambda}
\renewcommand{\next}{\nu}
\newcommand{\curr}{{t}_\mathit{curr}}
\newcommand{\AP}{\Pi} 
\newcommand{\Reg}{\mathcal R} 
\newcommand{\reg}{R} 
\renewcommand{\min}{\mathit{min}}
\newcommand{\com}{{}}
\newcommand{\Lang}{\mathcal{L}} 
\newcommand{\Set}{\mathsf{S}} 
\newcommand{\C}{\mathcal{C}}
\newcommand{\Next}{\mathsf{X}}
\newcommand{\Until}{\mathsf{U}}
\newcommand{\Always}{\mathsf{G}}
\newcommand{\Event}{\mathsf{F}}
\renewcommand{\epsilon}{\varepsilon}
\newcommand{\eg}{{e.g., }}
\begin{document}
\title{\textbf {Cooperative Decentralized Multi-agent Control under Local LTL Tasks and Connectivity Constraints}}
\author{Meng Guo, Jana T\r{u}mov\'a and Dimos V. Dimarogonas \thanks{The authors are with the ACCESS Linnaeus Center, School of Electrical
Engineering, KTH Royal Institute of Technology, SE-100 44, Stockholm,
Sweden and with the KTH Centre for
Autonomous Systems. \texttt{mengg, tumova, dimos@kth.se}. This work was supported by the EU STREP RECONFIG: FP7-ICT-2011-9-600825.}}
\maketitle

\begin{abstract}We propose a framework for the decentralized control of a team of agents that are assigned local tasks expressed as Linear Temporal Logic (LTL) formulas. Each local LTL task specification captures both the requirements on the respective agent's behavior and the requests for the other agents' collaborations needed to accomplish the task. Furthermore, the agents are subject to {communication} constraints. The presented solution follows the automata-theoretic approach to LTL model checking, however, it avoids the computationally demanding construction of synchronized product system between the agents. We suggest a decentralized coordination among the agents through a dynamic leader-follower scheme, to guarantee the low-level connectivity maintenance at all times and a progress towards the satisfaction of the leader's task. By a systematic leader switching, we ensure that each agent's task will be accomplished. 
\end{abstract}

\section{Introduction}

Cooperative control for multi-agent systems have been extensively studied for various purposes like consensus~\cite{1470239}, formation~\cite{1641830},~\cite{egerstedt2001formation}, and reference-tracking~\cite{hong2006tracking}, where each agent either serves to accomplish a global objective or fulfil simple local goals such as reachability. In contrast, we focus on planning under complex tasks assigned to the agents, such as periodic surveillance (repeatedly perform $A$), sequencing (perform $A$, then $B$, then $C$), or request-response (whenever $A$ occurs, perform $B$). Particularly, we follow the idea of correct-by-design control from temporal logic specifications that has been recently largely investigated both in single-agent~and multi-agent settings.
In particular, we consider a team of agents modeled as a dynamical system that are assigned a local task specification as Linear Temporal Logic (LTL) formulas. The agents might not be able to accomplish the tasks by themselves and hence requirements on the other agents' behaviors are also part of the LTL formulas. Consider for instance a team of robot operating in a warehouse that are required to move goods between certain warehouse locations. While light goods can be carried by a single robot, help from another robot is needed to move heavy goods, i.e. the requirement on another agents' behavior is a part of its LTL task specification.

{The goal of this work is to find motion controllers and action plans for the agents that guarantee the satisfaction of all individual LTL tasks. We aim for a decentralized solution while taking into account the constraints that the agents can exchange messages only if they are close enough. Following the hierarchical approach to LTL planning, we first generate for each agent a sequence of actions as a high-level plan that, if followed, guarantees the accomplishment of the respective agent's LTL task. Second, we merge and implement the syntesized plans in real-time, upon the run of the system. Namely, we introduce a distributed continuous controller for the leader-follower scheme, where the current leader guides itself and the followers towards the satisfaction of the leader's task. At the same time, the connectivity of the multi-agent system is maintained.  By a systematic leader re-election, we ensure that each agent's task will be met in long term.}

Multi-agent planning under temporal logic tasks has been studied in several recent papers~\cite{quo-icra2004,loizou-cdc2005,marius-cdc2011,sertac-ijnc2010,yushan-tr2012,alphan-ijrr2013,meng-cdc2013,lygeros-ecc2013, jana-acc2014}. Many of them build on top-down approach to planning, when a single LTL task is given to the whole team. For instance, in~\cite{yushan-tr2012,alphan-ijrr2013}, the authors propose decomposition of the specification into a conjunction of independent local LTL formulas. On the other hand, we focus on bottom-up planning from individual specification. Related work includes a decentralized control of a robotic team from local LTL specification with communication constraints proposed in~\cite{dimos-cdc12}. However, the specifications there are truly local and the agents do not impose any requirements on the other agents' behavior. 
In~\cite{meng-cdc2013}, the same bottom-up planning problem from LTL specifications is considered and a partially decentralized solution is designed that takes into account only clusters of dependent agents instead of the whole group. This approach is later extended in~\cite{jana-acc2014}, where a receding horizon approach to the problem is suggested. Both mentioned studies however assume that the agents are fully synchronized in their discrete abstractions and the proposed solutions rely on construction of the synchronized product system between the agents, or at least of its part. In contrast, in this work, we avoid the product  construction completely.

The contribution of the paper can be summarized as the proposal of a decentralized motion and action control scheme for multi-agent systems with complex local tasks which handles both connectivity constraints and collaborative tasks. The features of the suggested solution are as follows: (1) the continuous controller is distributed and integrated with the leader election scheme; (2) the distributed leader election algorithm only requires local communications and guarantees sequential progresses towards individual desired tasks; and (3) the proposed coordination scheme operates in real-time, upon the run of the system as opposed to offline solutions that require fully synchronized motions of all agents. 

The rest of the paper is organized as follows. In Section~\ref{sec:prelims} we state the necessary preliminaries. Section~\ref{sec:pf} formally introduces the considered problem. In Section~\ref{sec:ps} we describe the proposed solution in details. Section~\ref{sec:example} demonstrates the results in a simulated case study. Finally, we conclude in Section~\ref{sec:conc}.

\section{Preliminaries}
\label{sec:prelims}

Given a set $\Set$, let 
$2^\Set$,
and $\Set^\omega$
denote 
the set of all subsets of $\Set$, and the set of all infinite sequences of elements of $\Set$, respectively.
An infinite sequence of elements of $\Set$ is called an infinite word over $\Set$, respectively. 

{\begin{definition}
An LTL formula $\phi$ over the set of services $\Sigma$ is defined
  inductively as follows:
   \begin{enumerate}
  \setlength{\itemsep}{1pt}
  \setlength{\parskip}{0pt}
  \setlength{\parsep}{0pt}
  \item every service $\sigma \in \Sigma$ is a formula, and
  \item if $\phi_1$ and $\phi_2$ are formulas, then $\phi_1 \lor
    \phi_2$, $\lnot \phi_1$, $\Next\, \phi_1$, $\phi_1\,\Until\,\phi_2$, $\Event \, \phi_1$, and $\Always \, \phi_1$
    are each formulas,
  \end{enumerate}
 where $\neg$ (negation) and $\vee$
  (disjunction) are standard Boolean connectives, and $\Next$ (next), $\Until$ (until), $\Event$ (eventually), and  $\Always$ (always) are temporal operators.
  \end{definition}}

The semantics of LTL is defined over infinite words over~$2^\Sigma$. Intuitively, $\sigma$ is satisfied on a word $w = w(1)w(2)\ldots$ if it holds at its first position $w(1)$, i.e. if $\sigma \in w(1)$. Formula $\Next \, \phi$ holds true if $\phi$ is satisfied on the word suffix that begins in the next position $w(2)$, whereas $\phi_1 \, \Until\, \phi_2$ states that $\phi_1$ has to be true until $\phi_2$ becomes true. Finally, $\Event \, \phi$ and $\Always \, \phi$ are true if $\phi$ holds on $w$ eventually, and always, respectively. For the formal definition of the LTL semantics see, e.g.~\cite{principles}.

The set of all words that are accepted by an LTL formula $\phi$ is denoted by $\Lang(\phi)$. 

\begin{definition}[B\"uchi Automaton]
A B\"uchi automaton over alphabet $2^\Sigma$ is a tuple $\B =  (Q,q_{init},2^\Sigma,\delta,F)$, where
\begin{itemize}
\item 
$Q$ is a finite set of states; 
\item 
$q_{init}\in Q$ is the initial state; 
\item 
$2^\Sigma$ is an input alphabet; 
\item 
$\delta \subseteq Q \times \Sigma \times Q$ is a non-deterministic transition relation; 
\item 
$F$ is the acceptance condition.
\end{itemize}
\end{definition}

The semantics of B\"uchi automata are defined over infinite input words over $2^\Sigma$.
A \emph{run} of the B\"uchi automaton $\B$ \emph{over} an input word $w=w(1)w(2)\ldots$  is a sequence
$\rho=q_1q_2\ldots$, such that $q_1  = q_{init}$, and
$(q_i,w(i),q_{i+1}) \in \delta$, for all $i\geq 1$.
A run $\rho=q_1q_2\ldots$ is \emph{accepting} if it intersects $F$ infinitely many times. A word $w$ is \emph{accepted} by $\B$ if there exists an accepting run over $w$. The \emph{language} of all words accepted by $\B$ is denoted by $\Lang(\B)$.
Any LTL formula $\phi$ over $\Pi$ can be algorithmically translated into a B\"uchi automaton $\B$, such that $\Lang(\B) = \Lang(\phi)$~\cite{principles} and many software tools for the translation exist, e.g., ~\cite{ltl2ba}.

Given an LTL formula $\varphi$ over $\Sigma$, a word that satisfies $\varphi$ can be generated as follows. First, the LTL formula is translated into a corresponding B\"uchi automaton. Second, the B\"uchi automaton is viewed as a graph $G = (V, E)$, where $V = Q$, and $E$ is given by the transition relation $\delta$ in the expected way: $(q,q') \in E \iff \exists \Set \subseteq \Sigma$, such that $(q,\Set,q') \in \delta$. By finding a finite path (prefix) followed by a cycle (suffix) containing an accepting state, we find a word that is accepted by $\B$,  which is a word that satisfies $\varphi$ in a prefix-suffix form $\Set_1\ldots \Set_p(\Set_{p+1}\ldots \Set_s)^\omega$. Details can be found e.g., in~\cite{principles}.

In this particular work, we are interested only in subsets of $2^\Sigma$ that are singletons. Thus, with a slight abuse of notation, we interpret LTL over words over $\Sigma$, i.e. over sequences of services instead of sequences of subsets of services.

\section{Problem Formulation}\label{formulation}
\label{sec:pf}
\subsection{Agent Dynamics and Network Structure}\label{system}
Let us consider a team of $N$ agents, modeled by the single-integrator dynamics:
\begin{equation}\label{dynamics}
\dot{x}_i(t) = u_i(t), \qquad i\in \N=\{1,\ldots, N\},
\end{equation}
where $x_i(t), \, u_i(t) \in \mathbb{R}^2$ are the state and control inputs of agent~$i$ at time {$t> 0$, $x_i(0)$ is the given initial state, and $\traj_i(t)$ is the \emph{trajectory} of agent $i$ from $0$ to $t\geq 0$. We assume that all agents start at the same instant $t=0$.}

Suppose that each of the agents has a limited communication radius of $r_\com>0$. This means that at time $t$, agent $i$ can communicate, i.e., exchange messages directly with agent $j$ if and {only if} $\|x_i(t)-x_{j}(t)\|\leq r_{\com}$. This constraint imposes certain challenges on the distributed coordination of multi-agent systems as the inter-agent communication or information exchange depends on their relative positions. 

Agents $i$ and $j$ are \emph{connected} at time $t$ if and only if either $\|x_i(t)-x_{j}(t)\|\leq r_{\com}$, or if there exists $i'$, such that  $\|x_i(t)-x_{i'}(t)\|\leq r_{\com}$, where $i'$ and $j$ are connected. Hence, two connected agents can communicate indirectly. We assume that initially, all agents are connected.
{The particular message passing protocol is beyond the scope of this paper. For simplicity, we assume that message delivery is reliable, meaning that a message sent by agent $i$ will be received by all connected agents $j$. }

\subsection{Task Specifications}
	Each agent $i \in \N$ is assigned a set of $M_i$ \emph{services} $\Sigma_i = \{\sigma_{ih}, h \in \{1,\ldots,M_i\}\}$ that it is responsible for, and a set of $K_i$
regions, where subsets of these services can be \emph{provided}, denoted by $\Reg_i=\{\reg_{ig},\, g \in \{1,\ldots,K_i\}\}$. 
For simplicity of presentation, $\reg_{ig}$ is determined by a circular area: 
\begin{equation}\label{regions}
\reg_{ig}=\{y\in \mathbb{R}^2|\|y-c_{ig}\|\leq r_{ig}\}
\end{equation}
where $c_{ig}\in \mathbb{R}^2$ and $r_{ig}$ are the center and radius of the region, respectively, such that $r_{ig} \geq r_\min>0$, for a fixed minimal radius $r_\min$. {Furthermore, each region in $\Reg_i$ is reachable for each agent.}
Labeling function $L_i: \Reg_i \to 2^{\Sigma_i}$ assigns to each region $R_{ig}$ the set of services $L_i(R_{ig}) \subseteq \Sigma_i$ that can be provided in there.

Some of the services in $\Sigma_i$ can be provided solely by the agent $i$, while others require cooperation with some other agents. Formally, agent $i$ is associated with a set of \emph{actions} $\Pi_i$ that it is capable of \emph{executing}. The actions are of two types:
\begin{itemize}
\item action $\pi_{ih}$ of \emph{providing} the service $\sigma_{ih} \in \Sigma_i$; and
\item action $\varpi_{ii'h'}$ of \emph{cooperating} with the agent $i'$ in providing its service $\sigma_{i'h'} \in \Sigma_{i'}$.
\end{itemize}

A service $\sigma_{ih}$ then takes the following form:
\begin{equation}
\sigma_{ih} = \pi_{ih} \wedge \bigwedge_{i' \in \C_{ih}} \varpi_{i'ih},
\end{equation}
for the set of cooperating agents $\C_{ih}$, where $\emptyset \subseteq \C_{ih} \subseteq \N \setminus \{i\}$. Informally, a service $\sigma_i$ is provided if the agent's relevant service-providing action and the corresponding cooperating agents' actions are executed at the same time. Furthermore, it is required that at the moment of service providing, the agent and the cooperating agents from $\C_{ih}$ occupy the same region $R_{ig}$, where $\sigma_{ih} \in L_i(R_{ig})$.

\begin{definition}[Trace] A valid trace of agent $i$ is a tuple $\trace_i=(\traj_i(t), \TimeA, \Act_i, \TimeS, \Ser_i)$, where
\begin{itemize}
\item  $\traj_i(t)$ is a trajectory of agent $i$;
\item $\TimeA = \timeA_1,\timeA_2,\timeA_3,\ldots$ is the sequence of time instances when agent $i$ executes actions from $\AP_i$;
\item $\Act_i: \TimeA \to \AP_i$ represents the sequence of executed actions, both the service-providing and the cooperating ones; 
\item $\TimeS = \timeS_{1},\timeS_{2},\timeS_{3},\ldots$ is a sequence of time instances when services from $\Sigma_i$ are provided. Note that $\TimeS$ is a subsequence of $\TimeA$ and it is equal to the time instances when service-providing actions are executed; and

\item $\Ser_i: \TimeS \to \Sigma_i$ represents the sequence of provided services that satisfies the following property for all $l \geq 1$: There exists $g \in \{1,\ldots, K_i\}$, such that
\begin{itemize}
\item[(i)] $\traj_i(\timeS_l) \in R_{ig}$, $\Ser_i(\timeS_l) \in L_i(R_{ig})$, and $\Ser_i(\timeS_l) = \sigma_{ih} \Rightarrow \Act_i(\timeS_l) = \pi_{ih}$, and
\item[(ii)] for all $i' \in \C_{ih}$, it holds that $\traj_{i'}(\timeS_l) \in R_{ig}$ and $\Act_{i'}(\timeS_l) = \varpi_{i'ih}$.
\end{itemize}
\end{itemize}
\end{definition}

In other words, the agent $i$ can provide a service $\sigma_{ih}$ only if (i) it is present in a region $R_{ig}$, where this service can be provided, and it executes the relevant service-providing action $\pi_{ih}$ itself, and (ii) all its cooperating agents from $\C_{ih}$ are present in the same region $R_{ig}$ as agent $i$ and execute the respective cooperative actions needed.

\begin{definition}[LTL Satisfaction]
A valid trace $\trace_i=(\traj_i(t), \TimeA,\Act_i,  \TimeS= \timeS_{1},\timeS_{2},\timeS_{3},\ldots, \Ser_i: \TimeS \to \Sigma_i)$,  \emph{satisfies} an LTL formula over $\varphi_i$, denoted by $\trace_i \models \varphi_i$ if and only if $\Ser_i(\timeS_1)\Ser_i(\timeS_2)\Ser_i(\timeS_3)\ldots \models \varphi_i$.
\end{definition}

\begin{remark}
Traditionally, LTL is defined over the set of atomic propositions (APs) instead of services (see, e.g.~\cite{principles}). Usually APs represent inherent properties of system states. The labeling function $L$ then partitions APs into those that are true and false in each state. The  LTL formulas are interpreted over trajectories of systems or their discrete abstractions. 

In this work, we consider an alternative definition of LTL semantics to describe the desired tasks. Particularly, we perceive atomic propositions as offered services rather than undetachable inherent properties of the system states. For instance, given that a state is determined by the physical location of an agent, we consider atomic propositions of form ``in this location, an object can be loaded'', or ``there is a recharger in this location" rather than ``this location is dangerous''. In other words, the agent is in our case given the option to decide whether an atomic proposition $\sigma_{ih} \in L(R_{ig})$ is in state $x_i(t) \in R_{ig}$ satisfied or not. In contrast, $\sigma_i \in \Sigma_i$ is never satisfied in state $x_i(t) \in R_{ig}$, such that $\sigma_i \not \in L(R_{ig})$. The LTL specifications are thus interpreted over the sequences of provided services along the trajectories instead of the trajectories themselves.
\end{remark}

\subsection{Problem statement}
Given the above settings, we now formally state our problem:
\begin{problem}
Given a team of the {agents $\N$ subject to dynamics in Eq.~\ref{dynamics}}, synthesize for each agent $i \in \N$
\begin{itemize}
\item a control input $u_i$
\item a time sequence $\TimeA$, and 
\item an action sequence $\Act_i$,
\end{itemize}
such that the trace $\trace_i =(\traj_i(t), \TimeA, \TimeS, \Act_i, \Ser_i)$ is valid and satisfies the given local LTL task specification $\varphi_i$ over the set of services $\Sigma_i$.
\label{prob:main}
\end{problem}

\section{Problem  Solution}
\label{sec:ps}

{Our approach to the problem involves an offline and an online step. In the offline step, we synthesize a high-level plan in the form of a sequence of services for each of the agents. In the online step, we dynamically switch between the high-level plans through leader election. The whole team then follows the leader towards providing its next service.}

In this section, we provide the details of the proposed solution. Namely, we define the notion of connectivity graph for the multi-agent system as a necessary condition for the rest of the solution. Further, we focus on decentralized control of the whole team of agents towards a selected goal region $R_{\ell g}$ that is known only to a leading agent $\ell$ while maintaining their connectivity. Finally, we discuss the election of leading agents and progressive services to be provided, and goal regions to be visited that guarantee the satisfaction of all agents' tasks in long term.

\medskip

\subsection{Connectivity Graph}\label{connectivity}

Before discussing the structure of the proposed solution, let us introduce the notion of agents' connectivity graph that will allow us to handle the constraints imposed on communication between the agents.

Recall that each agent has a limited communication radius $r_\com>0$ as defined in Section~\ref{system}. Moreover, let $\varepsilon\in (0,\,r_\com)$ be a given constant. It is worth mentioning that $\varepsilon$ plays an important role for the edge definition below. In particular, it introduces a hysteresis in the definition for adding new edges to the communication graph.  
\begin{definition}\label{edge}
Let $G(t)=(\N, E(t))$ denote the undirected time-varying connectivity graph formed by the agents, where $E(t)\subseteq \N\times \N$ is the edge set  for $t \geq 0$. At time $t=0$, we set {$E(0)=\{(i,\,j)|\|x_i(0)-x_j(0)\|< r_\com \}$} 
At time $t > 0$, $(i,\, j)\in E(t)$ if and only if one of the following conditions hold:
\begin{itemize}
\item[(i)]$\|x_i(t)-x_j(t)\|\leq r_\com -\varepsilon $, or
\item[(ii)] $r_\com -\varepsilon<\|x_i(t)-x_j(t)\|\leq r_\com $ and $(i,j) \in E(t^-)$, where $t^-<t$ and $|t-t^-|\rightarrow 0$.
\end{itemize}
\end{definition}

Note that the condition (ii)  in the above definition guarantees that a new edge will only be added when the distance between two unconnected agents decreases below $r_\com -\varepsilon$. This property is crucial in proving the connectivity maintenance by Lemma~\ref{connected} and the convergence by Lemma~\ref{convergence}.

Consequently, each agent $i\in \N$ has a time-varying set of neighbouring agents, with which it can communicate directly, denoted by ${\N}_i(t)=\{i'\in \N\,|\,(i,\,i')\in E(t)\}$. 
Note that if $j$ is reachable from $i$ in $G(t)$ then agents $i$ and $j$ are connected, i.e., they can communicate directly or indirectly. From the initial connectivity requirement, we have that $G(0)$ is connected. Hence, maintaining $G(t)$ connected for all $t \geq 0$ ensures that the agents are always connected, too.

\subsection{Continuous Controller Design}

In this section, let us firstly focus on the following problem: given a leader $\leader\in \N$ at time $t$ and a goal region $R_{\ell g}\in \mathcal{R}_{\ell}$, propose a decentralized continuous controller that:  (1) guarantees that all agents $i\in\N $ reach $R_{\ell g}$ at a finite time $\overline{t}<\infty$; (2) $G(t')$ remains connected for all $t' \in [t,\,\overline{t}]$. Both objectives are critical for the leader selection scheme introduced in Section~\ref{goal-region}, which ensures sequential satisfaction of $\varphi_i$ for each $i \in \N$. 

Denote by $x_{ij}(t)=x_i(t)-x_j(t)$  the pairwise relative position between neighbouring agents, $\forall (i,\,j)\in E(t)$. Thus $\|x_{ij}(t)\|^2=\big(x_i(t)-x_j(t)\big)^T\big(x_i(t)-x_j(t)\big)$ denotes the corresponding distance.  We propose the continuous controller with the following structure:
\begin{equation}\label{control}
u_i(t)=-b_i\big(x_i-c_{ig}\big)-\sum_{j\in \N_i(t)}\nabla_{x_i}\phi\big(\|x_{ij}\|\big),
\end{equation}
where $\nabla_{x_i} \phi(\cdot)$ is the gradient of the potential function $\phi\big(\|x_{ij}\|\big)$ with respect to $x_i$, which is to be defined; $b_i\in \{0,\,1\}$ indicates if agent $i$ is the leader; $c_{ig}\in \mathbb{R}^2$ is the center of the next goal region for agent $i$; $b_i$ and $c_{ig}$ are derived from the leader selection scheme in Section~\ref{goal-region} later. 

The potential function $\phi(\|x_{ij}\|)$ is defined as follows 
\begin{equation}\label{potential1}
\phi\big(\|x_{ij}\|\big)=\frac{\| x_{ij}\|^2}{r^2-\| x_{ij}\|^2}, \qquad \|x_{ij}\|\in [0,\, r),
\end{equation} 
and has the following properties: (1) its partial derivative of $\phi(\cdot)$ over $\|x_{ij}\|$ is given by 
\begin{equation}
\begin{split}
&\frac{\partial\, \phi\big(\|x_{ij}\|\big)}{\partial\, \|x_{ij}\|}=\frac{-2r^2\, \|x_{ij}\|}{(r^2-\|x_{ij}\|^2)^2}\geq 0
\end{split}
\end{equation}
for $\|x_{ij}(t)\|\in [0,\, r)$ and the equality holds when $\|x_{ij}\|=0$; (2) $\phi\big(\|x_{ij}\|\big)\rightarrow 0$ when $\| x_{ij}\|\rightarrow 0$; (3) $\phi\big(\|x_{ij}\|\big)\rightarrow +\infty$ when $\| x_{ij}\|\in [0,\, r)$. As a result, controller~\eqref{control} becomes
\begin{equation}\label{control2}
u_i(t)=-b_i\big(x_i-c_{ig}\big)-\sum_{j\in \N_i(t)}\frac{2r^2}{(r^2-\| x_{ij}\|^2)^2}(x_i-x_j),
\end{equation}
which is fully distributed as it only depends $x_i$ and $x_j$, $\forall j\in \N_i(t)$.


\begin{lemma}\label{connected}
Assume that $G(t)$ is connected at $t=T_1$ and agent $\ell\in\N$ is the fixed leader for all $t\geq T_1$. By applying the controller in Eq.~\eqref{control2}, $G(t)$ remains connected and $E(T_1)\subseteq E(t)$ for $t\geq T_1$. 
\end{lemma}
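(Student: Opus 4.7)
The plan is to build a composite energy function that is the potential of the closed-loop gradient flow, and then exploit the hysteresis gap $\varepsilon>0$ in Definition~\ref{edge} to preclude the loss of any initial edge.

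First I would introduce the Lyapunov-like function
\[
W(t) \;=\; \sum_{\{i,j\}\in E(t)} \phi\big(\|x_{ij}(t)\|\big) \;+\; \tfrac{1}{2}\, b_\ell\, \|x_\ell(t)-c_{\ell g}\|^2,
\]
where the sum ranges over unordered pairs. Using the symmetry $\phi(\|x_{ij}\|)=\phi(\|x_{ji}\|)$ and the gradient formula already derived for $\phi$, a direct computation shows that the distributed controller~\eqref{control2} is exactly $u_i=-\nabla_{x_i}W$ for the \emph{current} edge set $E(t)$. Consequently, on any interval on which $E(t)$ is constant, $W$ is $C^1$ and
\[
\dot W(t) \;=\; \sum_{i\in\N}\nabla_{x_i}W\cdot u_i \;=\; -\sum_{i\in\N}\|u_i(t)\|^2 \;\leq\; 0,
\]
so $W$ decreases monotonically between discrete edge-set changes.

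Next I would analyse the jumps of $W$ induced by changes of $E(t)$. By Definition~\ref{edge}, a new edge $\{i,j\}$ can enter $E(t)$ only when $\|x_{ij}\|$ crosses $r-\varepsilon$ from above, which produces a single upward jump in $W$ of size exactly $\phi(r-\varepsilon)<\infty$. In contrast, an edge can leave $E(t)$ only when $\|x_{ij}\|$ crosses $r$ from below, and property~(3) of $\phi$ then forces $\phi(\|x_{ij}(t)\|)\to+\infty$, so $W(t)\to+\infty$ on approach. This asymmetry between the two event types is the crux of the argument.

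Finally, let $\tau\in(T_1,+\infty]$ be the infimum of times at which any edge is removed from $E(t)$, and suppose for contradiction that $\tau<\infty$. On $[T_1,\tau)$ only additions can occur, and there are at most $\binom{N}{2}-|E(T_1)|$ of them, so combining with $\dot W\leq 0$ between jumps yields
\[
W(t)\;\leq\; W(T_1)\,+\,\binom{N}{2}\phi(r-\varepsilon)\;<\;+\infty,\qquad \forall\, t\in[T_1,\tau).
\]
But the preceding paragraph forces $W(t)\to+\infty$ as $t\uparrow\tau$, a contradiction. Hence $\tau=+\infty$, no edge is ever lost, and in particular $E(T_1)\subseteq E(t)$ for all $t\geq T_1$; since $G(T_1)$ is already connected and edges are only ever added, $G(t)$ remains connected throughout. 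The hardest part of the proof is this hybrid bookkeeping: one must use the hysteresis $\varepsilon>0$ to keep every edge-addition jump finite and uniformly bounded, so that the only way $W$ could blow up along a solution is through an impending edge removal, which is precisely what the contradiction rules out.
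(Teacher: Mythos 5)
Your proposal is correct and follows essentially the same route as the paper: the same potential-plus-leader-goal energy function (your $W$ equals the paper's $V$ up to the double-counting convention), the same monotone decrease between edge-set changes, the same finite upward jump of at most $\phi(r-\varepsilon)$ per added edge thanks to the hysteresis, and the same contradiction showing an edge loss would force the energy to blow up. Your phrasing of the controller as the exact gradient flow of $W$ and the explicit $\binom{N}{2}$ bound on the number of jumps are slightly cleaner packagings of the same argument.
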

\begin{proof}
Assume that $G(t)$ remains \emph{invariant} during $[t_1,\, t_2)\subseteq [T_1,\,\infty)$, i.e., no new edges are added to $G(t)$. Consider the following  function:
\begin{equation}\label{lyapunov}
V(t)=\frac{1}{2}\sum_{i=1}^{N}\sum_{j\in \N_i(t)}\phi(\|x_{ij}\|) + \frac{1}{2}\sum_{i=1}^{N} b_i (x_i-c_{ig})^T(x_i-c_{ig}),
\end{equation}
which is positive semi-definite. The time derivative of~\eqref{lyapunov} along system~\eqref{dynamics} is given by 
\begin{equation}
\begin{split}
\dot{V}(t)&=\sum_{i=1}^{N} \frac{\partial V}{\partial x_i}\,\dot{x}_i\\
&=\sum_{i=1,\, i\neq \ell}^{N} \bigg(\big(\sum_{j\in \N_i(t)} \nabla_{x_i}\phi(\|x_{ij}\|)\big)\,u_i\bigg)\\
&\qquad  + \bigg(\sum_{j\in \N_{\ell}(t)} \nabla_{x_{\ell}}\phi(\|x_{\ell j}\|)+ (x_\ell-c_{\ell g})\bigg)\,u_{\ell}.
\end{split}
\end{equation}
By~\eqref{control}, for follower $i\neq \ell$, the control input is given by 
$$
u_i=-\sum_{j\in \N_i(t)} \nabla_{x_i}\phi(\|x_{ij}\|)
$$
since $b_i=0$ for all followers. For the single leader $\ell$, its control input is given by 
$$
u_\ell=- (x_\ell-c_{\ell g})-\sum_{j\in \N_{\ell }(t)} \nabla_{x_{\ell}}\phi(\|x_{\ell j}\|)
$$
since $b_\ell=1$. This implies that 
\begin{equation}\label{derivative}
\begin{split}
\dot{V}(t)&=-\sum_{i=1,\, i\neq \ell}^{N} \|\sum_{j\in \N_i(t)} \nabla_{x_i}\phi(\|x_{ij}\|)\|^2\\
&\qquad  -\| (x_\ell-c_{\ell g})+\sum_{j\in \N_{\ell}(t)} \nabla_{x_{\ell}}\phi(\|x_{\ell j}\|)\,\|^2\leq 0.
\end{split}
\end{equation}
Thus $V(t)\leq V(0)<+\infty$ for $t\in [t_1,\,t_2)$. It means that during $[t_1,\, t_2)$, no existing edge can have a length close to $r$, i.e., no existing edge will be \emph{lost} by the definition of an edge. 

On the other hand, assume a \emph{new} edge $(p,\,q)$ is added to $G(t)$ at $t=t_2$, where $p,\, q\in \N$. By Definition~\ref{edge}, it holds that $\|x_{pq}(t_2)\|\leq r-\varepsilon$ and $\phi(\|x_{pq}(t_2)\|)=\frac{r-\varepsilon}{\varepsilon(2r-\varepsilon)}<+\infty$ since $0<\varepsilon<r$. Denote the set of newly-added edges at $t=t_2$ as $\widehat{E}\subset \N\times \N$. Let $V(t_2^+)$ and $V(t_2^-)$ be the value of Lyapunov function from~\eqref{lyapunov} before and after adding the set of new edges to $G(t)$ at $t=t_2$. We get
\begin{equation}
\begin{split}
V(t_2^+) &= V(t_2^-) + \sum_{(p,\,q)\in \widehat{E}}\phi(\|x_{pq}(t_2)\|)\\
&\leq  V(t_2^-) + |\widehat{E}|\, \frac{r-\varepsilon}{\varepsilon(2r-\varepsilon)}<+\infty.
\end{split}
\end{equation}
Thus $V(t)<\infty$ also holds when new edges are added. As a result, $V(t)<+\infty$ for $t\in [T_1,\, \infty)$. By Definition~\ref{edge}, one existing edge $(i,\,j)\in E(t)$ will be lost only if $x_{ij}(t)=r$. It implies that $\phi(\|x_{ij}\|)\rightarrow +\infty$, i.e., $V(t)\rightarrow +\infty$ by~\eqref{lyapunov}. By contradiction,  we can conclude that new edges might be added but no existing edges will be lost, namely $E(T_1)\subseteq E(t)$, $\forall t\geq T_1$. 

To conclude, given a connected $G(t)$ at $t=T_1$ and a fixed leader $\ell\in \N$ for $t\geq T_1$, it is guaranteed that $G(t)$ remains connected, $\forall t\geq T_1$.
\end{proof}
\begin{lemma}\label{convergence}
Given that $G(t)$ is connected at $t=T_1$ and the fixed leader $\ell \in \N$ for $t\geq T_1$, it is guaranteed that under controller in Eq.~\eqref{control2} there exist $T_1\leq \overline{t}<+\infty$
\begin{equation}
x_i(\overline{t}) \in R_{\ell g}, \qquad \forall i\in \N.
\end{equation}
\end{lemma}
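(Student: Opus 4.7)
The plan is to combine the Lyapunov analysis from Lemma~\ref{connected} with LaSalle's invariance principle to show that every agent converges asymptotically to $c_{\ell g}$; entry of each $x_i$ into the closed ball $R_{\ell g}$ (of strictly positive radius $r_{\ell g}\geq r_\min$) in finite time then follows immediately. First I would note that, by Lemma~\ref{connected}, for $t\geq T_1$ edges can only be \emph{added}, and the total number of possible edges is bounded by $N(N-1)/2$; hence there exists a finite $T_2\geq T_1$ after which no further edge addition occurs. For $t\geq T_2$ the topology $G(T_2)$ is fixed, $V$ in~\eqref{lyapunov} is smooth in the agents' positions with $\dot V(t)\leq 0$ by~\eqref{derivative}, and the barrier property $\phi(\|x_{ij}\|)\to +\infty$ as $\|x_{ij}\|\to r$ together with the quadratic term penalising $\|x_\ell-c_{\ell g}\|$ confines the entire configuration to a compact sublevel set of $V$ on which the dynamics is smooth.

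LaSalle's invariance principle then guarantees that the trajectory approaches the largest invariant set $\mathcal{I}\subseteq \{\dot V=0\}$. By~\eqref{derivative}, $\dot V=0$ is equivalent to the joint equilibrium conditions
\begin{equation*}
\sum_{j\in \N_i(T_2)} \nabla_{x_i}\phi(\|x_{ij}\|)=0,\ i\neq \ell,\qquad (x_\ell-c_{\ell g})+\sum_{j\in \N_\ell(T_2)} \nabla_{x_\ell}\phi(\|x_{\ell j}\|)=0.
\end{equation*}
Summing all $N$ vector equations and using the antisymmetry $\nabla_{x_i}\phi(\|x_{ij}\|)=-\nabla_{x_j}\phi(\|x_{ij}\|)$ on the undirected edges, every pairwise gradient cancels in pairs and only $x_\ell-c_{\ell g}=0$ survives. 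Hence on $\mathcal{I}$ the leader is pinned at $c_{\ell g}$.

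The main obstacle is pinning the \emph{followers} as well, since in principle the inter-agent forces could balance without every agent being co-located with the leader. Rewriting each follower equilibrium as
\begin{equation*}
x_i=\sum_{j\in \N_i(T_2)} \alpha_{ij}\, x_j,\qquad \alpha_{ij}=\frac{(r^2-\|x_{ij}\|^2)^{-2}}{\sum_{k\in \N_i(T_2)}(r^2-\|x_{ik}\|^2)^{-2}}>0,
\end{equation*}
shows that every follower lies in the strict convex hull of its neighbours. I would then invoke a discrete maximum principle for the strictly convex function $f(y)=\|y-c_{\ell g}\|^2$: if the set $S=\{i:f(x_i)=\max_k f(x_k)\}$ were non-empty with positive maximum, strict convexity would force every neighbour of any $i\in S$ to lie in $S$ and to coincide with $x_i$; by connectivity of $G(T_2)$ this property would propagate $S$ to $\ell$, contradicting $f(\ell)=0$. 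Therefore $x_i=c_{\ell g}$ for all $i\in\N$ throughout $\mathcal{I}$, so $x_i(t)\to c_{\ell g}$ as $t\to\infty$, and since $R_{\ell g}$ is a closed ball of radius $r_{\ell g}\geq r_\min>0$ about $c_{\ell g}$, there exists a finite $\overline{t}\geq T_1$ with $x_i(\overline{t})\in R_{\ell g}$ for every $i\in\N$, as claimed.
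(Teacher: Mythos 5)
Your proof is correct, and it reaches the paper's intermediate conclusion --- that the set $\{\dot V = 0\}$ consists exactly of the configurations with $x_i = c_{\ell g}$ for all $i \in \N$ --- by a genuinely different route. The paper assembles the equilibrium conditions \eqref{condflow}--\eqref{condlead} into the weighted Laplacian $H$ with entries $h_{ij}=2r^2/(r^2-\|x_{ij}\|^2)^2$ and invokes the spectral fact that, for a connected graph, $H$ is positive semidefinite with a simple zero eigenvalue spanned by $\mathbf{1}_N$, so that the algebraic system forces $\mathbf{x}=\mathbf{1}_N\otimes c_{\ell g}$. You instead (a) sum all $N$ equilibrium equations and use the antisymmetry of the pairwise gradients to pin the leader at $c_{\ell g}$ directly, and (b) eliminate the followers with a discrete maximum principle for the strictly convex function $\|y-c_{\ell g}\|^2$, propagating the contradiction along a path to the leader; connectivity of $G$ is used in both arguments, just in different guises (simplicity of the Laplacian kernel versus path-propagation). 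Your version is more elementary and self-contained, at the cost of a slightly longer combinatorial argument; the paper's version is shorter but leans on a cited result from algebraic graph theory. A further point in your favour: you explicitly justify applying LaSalle's invariance principle to what is a priori a system with switching neighbour sets, by observing that Lemma~\ref{connected} permits only edge additions, of which there can be at most $N(N-1)/2$, so the topology is constant after some finite $T_2$ and the closed loop is autonomous on a compact sublevel set of $V$ thereafter. The paper applies LaSalle without addressing this, so your treatment is, if anything, the more rigorous of the two. The final step --- converting asymptotic convergence to $c_{\ell g}$ into finite-time entry into the ball $R_{\ell g}$ of radius at least $r_\min>0$ --- is identical in both proofs.
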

\begin{proof}
First of all, it is shown in Lemma~\ref{connected} that  $G(t)$ remains connected for $t\geq T_1$ if $G(T_1)$ is connected. Moreover $E(T_1)\subseteq E(t)$, $\forall t\geq T_1$, i.e., no existing edges will be lost.

Now we show that all agents converge to the goal region of the leader in finite time. By~\eqref{derivative}, $\dot{V}(t)\leq 0$ for $t\geq T_1$ and $\dot{V}(t)=0$ when the following conditions hold: (1) for $i \neq \ell$ and $i\in \N$, it holds that 
\begin{equation}\label{condflow}
\begin{split}
&\sum_{j\in \mathcal{N}_i(t)}\frac{2r^2}{(r^2-\| x_{ij}\|^2)^2}(x_i-x_j)=0;
\end{split}
\end{equation}
(2) for the leader $\ell \in \N$, it holds that 
\begin{equation}\label{condlead}
(x_\ell-c_{\ell g})+\sum_{j\in \N_{\ell}(t)} \frac{2r^2}{(r^2-\| x_{\ell j}\|^2)^2}(x_\ell-x_j)=0.
\end{equation}
Denote by 
\begin{equation}\label{weight}
h_{ij}=\frac{2r^2}{(r^2-\| x_{ij}\|^2)^2}, \qquad \forall(i,\,j)\in E(t).
\end{equation}
We can construct a $N\times N$ matrix $H$ satisfying $H(i,i)=\sum_{j\in \mathcal{N}_i}h_{ij}$ and $H(i,j)=-h_{ij}$, where $i\neq j\in \N$. Since $x_{ij}\in [0,\,r-\varepsilon)$, $\forall (i,\,j)\in E(t)$, it holds that $h_{ij} > 0$. As shown in~\cite{ren2007multi}, $H$ is positive semidefinite with a single eigenvalue at the origin, of which the corresponding eigenvector is the unit column vector of length $N$, denoted by $\mathbf{1}_N$. By combining~\eqref{condflow} and~\eqref{condlead}, we get 
\begin{equation}\label{equilibrium1}
H\otimes I_2 \cdot \mathbf{x} + (\mathbf{x}-\mathbf{c}) =0
\end{equation}
where $\otimes$ denotes the Kronecker product~\cite{horn2012matrix}; $\mathbf{x}$ is the stack vector for $x_i$, $i\in \mathcal{N}$; $I_2$ is the $2\times 2$ identity matrix; $\mathbf{c}=\mathbf{1}_N\otimes c_{lg}$. Then 
$$H \otimes I_2\cdot  \mathbf{c}= (H \otimes I_2) \cdot (\mathbf{1}_N\otimes c_{lg})=(H \cdot \mathbf{1}_N)  \otimes (I_2 \cdot c_{lg}).$$
Since $H \cdot \mathbf{1}_N= \mathbf{0}_N$, it implies that $H \otimes I_2\cdot  \mathbf{c}=\mathbf{0}_{2N}$. By~\eqref{equilibrium1}, it implies that 
$
H\otimes I_2 \cdot (\mathbf{x}-\mathbf{c}) =0. 
$
Since we have shown that $H$ is positive semidefinite with one eigenvalue at the origin, \eqref{equilibrium1} holds only when $\mathbf{x}=\mathbf{c}$, i.e., $x_i=c_{\ell g}$, $\forall i\in \N$.

By LaSalle's Invariance principle~\cite{khalil2002nonlinear}, the closed-loop system under controller in Eq.~\eqref{control2} will converge to the largest invariant set inside the region 
\begin{equation}\label{invariant}
S=\{\mathbf{x}\in \mathbb{R}^{2N}\,|\, x_i=c_{\ell g}, \forall i\in \mathcal{N}\},
\end{equation}
as $t\rightarrow +\infty$. In other words, it means that all agents in $\N$ converge to the same point $c_{\ell g}$. Since clearly $c_{\ell g}\in R_{\ell g}$, by continuity all agents would enter $R_{\ell g}$ which has a minimal radius $r_{\min}$ by~\eqref{regions}. Consequently, there exists $\overline{t}<+\infty$ that $x_i(\overline{t})\in R_{\ell g}$, $\forall i\in \N$.

To conclude, given a connected initial graph $G(T_1)$ and the fixed leader $\ell \in \N$ for $t\geq T_1$, it is guaranteed that under controller in Eq.~\eqref{control2} all agents will converge to the region $R_{\ell g}$ in finite time. 
\end{proof}

\subsection{Progressive Goal and Leader Election}\label{goal-region}

To complete the solution to Problem~\ref{prob:main}, we discuss the election of the leader $\leader$ and the choice of a goal region $R_{\leader g}$ at time~$t$. As the first offline and fully decentralized step, we generate for each agent $i$ a \emph{high-level plan}, which is represented by the sequence of services that, if provided, guarantee the satisfaction of $\varphi_i$. Secondly, in a repetitive online procedure, each agent $i$ is assigned a value that, intuitivelly, represents the agent's urge to provide the next service in its high-level plan. Using ideas from bully leader election algorithm~\cite{garcia}, an agent with the strongest urge is always elected as a leader within the connectivity graph. By changing the urge dynamically at the times when services are provided, we ensure that each of the agents is elected as a leader infinitely often. Thus, each agent's precomputed high-level plan is followed.

\subsubsection{Offline high-level plan computation} \label{initial-plan}
Given an agent $i \in \N$, a set of services $\Sigma_i$, and an LTL formula $\varphi_i$ over $\Sigma_i$, a high-level plan for $i$ can be computed via standard model-checking methods as described in Section~\ref{sec:prelims}. Roughly, by translating $\varphi_i$ into a language equivalent B\"uchi automaton and by consecutive analysis of the automaton, a sequence of services $\plan_i = \sigma_{i1} \ldots \sigma_{ip_i}(\sigma_{ip_i+1} \ldots \sigma_{is_i})^\omega$, such that $\plan_i \models \varphi_i$ can be found. 

\subsubsection{Urge function} 
Let $i$ be a fixed agent, $t$ the current time and $\sigma_{i1}\ldots \sigma_{ik}$ a prefix of services of the high-level plan $\Omega_i$ that have been provided till $t$. Moreover, let $\tau_{i\last}$ denote the time, when the latest service, i.e., $\sigma_{i\last}=\sigma_{ik}$ was provided, or $\tau_{i\last}=0$ in case no service prefix of $\Omega_i$ has been provided, yet.

Using $\tau_{i\last}$, we could define agent $i$'s \emph{urge} at time $t$ as a tuple
\begin{equation}
\urge_i(t) = (t- \tau_{i\last} ,\, i).\label{eq:urge}
\end{equation}
Furthermore, to compare the agents' urges at time $t$, we use lexicographical ordering: $\urge_i(t) > \urge_j(t)$ if and only if 
\begin{itemize}
\item $t - \tau_{i\last} > t - \tau_{j\last}$, or 
\item $t - \tau_{i\last}$ = $t - \tau_{j\last}$, and $i > j$. 
\end{itemize}

Note that $i \neq j$ implies that $\urge_i(t) \neq \urge_j(t)$, for all $t\geq 0$. As a result, the defined ordering is a linear ordering and at any time $t$, there exists exactly one agent $i$ maximizing its urge $\urge_i(t)$.

\subsubsection{Overall algorithm}

The algorithm for an agent $i \in \N$ is summarized in Alg.~\ref{alg:main} and is run on each agent separately, starting at time $t = 0$.

\begin{algorithm}[!h]
\caption{Solution to Prob.~\ref{prob:main}}
\label{alg:main}
\begin{algorithmic}[1]
\small
\INPUT Agents' own ID $i$, the set of all agent IDs $\N$, formula $\varphi_i$
\OUTPUT $\trace_i$
\STATE compute plan $\Omega_i:= \sigma_{i1} \ldots \sigma_{ip_i}(\sigma_{ip_i+1} \ldots \sigma_{is_i})^\omega$\label{line:omega}
\STATE $\tau_{i\last} := 0$; $\sigma_{i\next} := \sigma_{i1}$
\label{line:init}
\STATE send $\mathsf{ready}(i)$ and wait to receive $\mathsf{ready}(j)$ for all $j \in \N \setminus \{i\}$\label{line:ready}
\IF {$i = N$} \label{line:elect1}
\STATE send $\mathsf{init\_elect}(i,\curr)$, where $\curr$ is the current time
\ENDIF\label{line:elect2}
\LOOP
\STATE wait to receive a message $m$\label{line:message}
\SWITCH {$m$}\label{line:switch}
\CASEONE {$m = \mathsf{init\_elect}(i',t)$ for some $i' \in \N$ and time $t$} \label{line:elect}
\STATE send $\mathsf{me}(\urge_i(t))$ and wait to receive  $\mathsf{me}(\urge_j(t))$ from all $j \in \N \setminus \{i\}$\label{line:electme}
\STATE elect the leader $\ell \in \N$ maximizing $\urge_\ell(t)$ \label{line:max}
\STATE send $\mathsf{finish\_elect}(i)$ and wait to receive  $\mathsf{finish\_elect}(j)$ from all $j \in \N \setminus \{i\}$\label{line:elected}
\IF {$\ell = i$} \label{line:leader}
\STATE $b_i := 1$ \label{line:15}
\STATE pick $R_{\ell g} = R_{ig}$, such that $\sigma_{i\next} \in L_i(R_{ig})$
\REPEAT
\STATE apply controller $u_i$ from Eq.~(\ref{control2}) \label{line:u1}
\UNTIL $x_j(t) \in R_{\ell g}$ for all $j \in \{i\} \cup \C_{i\next}$ \label{line:19}
\STATE  send $\mathsf{execute\_request}(\varpi_{ji\next})$ for all $j \in \C_{i\next}$\label{line:20}
\STATE execute $\pi_{i\next}$\label{line:21}
\STATE $\tau_{i\last}: = 0$; $\sigma_{i\next} := \sigma_{i\next +1}$\label{line:22}
\STATE  {update prefixes of $\TimeA,\Act_i,\TimeS$, and $\Ser_i$ }  \label{line:update1}
\STATE send $\mathsf{init\_elect}(i,\curr)$, where $\curr$ is the current time\label{line:23}
\ELSE\label{line:nonleader}
\STATE $b_i := 0$
\REPEAT
\STATE apply controller $u_i$ from Eq.~(\ref{control2})\label{line:u2}
\UNTIL a message $m$ is received; goto line~\ref{line:switch}\label{line:28}
\ENDIF\label{line:endif}
\CASE {$m = \mathsf{execute\_request}(\varpi_{ii'h'})$ for some $i' \in \N$, and $\sigma_{i'h'} \in \Sigma_{i'}$}
\STATE execute $\varpi_{ii'h'}$\label{line:32}
\STATE {update prefixes of $\TimeA$, and $\Act_i$}; goto line~\ref{line:switch}\label{line:update2}
\ENDSWITCH
\ENDLOOP
\end{algorithmic}
\end{algorithm}

The algorithm is initialized with the offline computation of the high-level plan $\Omega_i = \sigma_{i1} \ldots \sigma_{ip_i}(\sigma_{ip_i+1} \ldots \sigma_{is_i})^\omega$ as outlined above, and setting the values $\tau_{i\last} = 0$, $\sigma_{i\next} = \sigma_{i1}$ (lines~\ref{line:omega} -- \ref{line:init}). Then, the agent broadcasts a message to acknowledge the others that it is ready to proceed and waits to receive analogous messages from the remaining agents (line~\ref{line:ready}). The first leader election is triggered by a message sent by the agent $N$ (lines~\ref{line:elect1} -- \ref{line:elect2}) equipped with the time stamp $\curr$ of the current time.

Several types of messages can be received by the agent~$i$. Message $\mathsf{init\_elect}(i',t)$, where $i'$ is an arbitrary agent ID and $t$ is a time stamp, notifies that leader re-election is triggered (line \ref{line:elect}). In such a case, the agent sends out the message $\mathsf{me}(\urge_i(t))$ containing its own urge value $\urge_i(t)$ at the received time $t$ and waits to receive analogous messages from the others (line~\ref{line:electme}). The agent with the maximal urge is elected as the leader (line~\ref{line:max}) and the algorithm proceeds when each of the agents has set the new leader (line~\ref{line:elected}). Note that the elected leader is the same for all the agents.

The rest of the algorithm differs depending on whether the agent $i$ is the leader~(\ref{line:leader}--\ref{line:nonleader}) or not (\ref{line:nonleader}--\ref{line:endif}). The leader applies the controller from Eq.~(\ref{control2}) to reach a region where the next service $\sigma_{i \next}$ can be provided (lines~\ref{line:15}--\ref{line:19}). At the same time, it waits for the cooperating agents to reach the same region (line~\ref{line:19}). Then it provides service $\sigma_{i \next}$, with the help of the others (lines~\ref{line:20}--\ref{line:21}) and it sets the new latest service providing time $\tau_{i \last} = 0$, and the next service to be provided $\sigma_{i \next}$ to the following service in its plan, i.e. $ \sigma_{i \next +1}$, where, with a slight abuse of notation, we assume that $\sigma_{i s_i +1} = \sigma_{i p_i+1}$ (line~\ref{line:22}). For simplicity of presentation, we assume that the execution of an action $\pi_{i\next}$ is  synchronized with the execution of the action $\varpi_{ji\next}$, for all $j\in \C_{i\next}$. The details of the synchronization procedure are beyond the scope of this paper; for instance, the leader can decide a future time instance when $\varpi_{ji\next}$ should be executed and send it as a part of the $\mathsf{execute\_request}$ message.
Finally, the leader triggers a leader re-election (line~\ref{line:23}) with the current time $\curr$ as a time stamp.

A follower simply applies the controller from Eq.~(\ref{control2}) until it receives a message from the leader (lines~\ref{line:nonleader}--\ref{line:28}). The message can be either $\mathsf{execute\_request}(\varpi_{ii'h'})$ for a cooperating agent or $\mathsf{init\_elect}(i',t)$ for a non-cooperating agent.  

The algorithm naturally determines the trace $\trace_i = (\traj_i(t), \TimeA, \Act_i, \TimeS, \Ser_i)$ of the agent $i$ as follows: The trajectory $\traj_i(t)$ is given through the application of the controller $u_i$ from Eq.~(\ref{control2}) (lines~\ref{line:u1} and \ref{line:u2}). The sequences $\TimeA, \Act_i, \TimeS, \Ser_i$ are iteratively updated upon the agent's run (lines~\ref{line:update1}~and~\ref{line:update2}). Initially, they are all empty sequences and a time instant, an action, or a service is added to them whenever an action is executed (lines~\ref{line:21}~and~\ref{line:32}) or a service is provided (line~\ref{line:21}), respectively.

\bigskip

To prove that the proposed algorithm is correct, we first prove that each agent $i$ is elected as a leader infinitely many times:

\begin{lemma}\label{lemma:leader}
Given an agent $i\in \N$ at time $t$, there exists $T \geq t$, such that $\urge_i(T) > \urge_j(T)$, for all $j \in \N$, and $t \geq 0$.
\end{lemma}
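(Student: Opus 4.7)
The plan is to establish the stronger claim that each agent $i \in \N$ is elected as leader infinitely many times; the lemma then follows immediately, since at any election time $T$ where $i$ wins, the maximizer check on line~\ref{line:max} of Alg.~\ref{alg:main} forces $\urge_i(T) > \urge_j(T)$ for every $j \in \N \setminus \{i\}$. A preliminary observation is that each \emph{round}---one complete iteration of election, travel to the leader's goal region, synchronized action execution, and triggering of the next election---terminates in finite time. The election phases (lines~\ref{line:electme}--\ref{line:elected}) consist of finitely many broadcasts and finish under reliable message delivery; once $\ell$ and $R_{\ell g}$ are fixed, Lemma~\ref{convergence} guarantees that all agents reach $R_{\ell g}$ in finite time under the controller of Eq.~\eqref{control2}; and the synchronized execution of $\pi_{i\next}$ with its cooperating actions is assumed instantaneous. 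Hence infinitely many elections take place along any infinite run of the system.

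Suppose for contradiction that $i$ is elected only finitely often, and let $T^*$ be a time strictly after its last election (taking $T^* = 0$ if $i$ is never elected). Then $\tau_{i\last}$ remains frozen at some value $\tau_i^* \leq T^*$ for all $t \geq T^*$, so the first coordinate of $\urge_i(t)$ equals $t - \tau_i^*$ and grows linearly in $t$. Because infinitely many elections occur after $T^*$ but only $N - 1$ candidates are available, the pigeonhole principle supplies some $j^* \in \N \setminus \{i\}$ that is elected at two distinct post-$T^*$ times $T_{k_1} < T_{k_2}$. After winning at $T_{k_1}$, agent $j^*$ provides its service at some time $T_{k_1}' \geq T_{k_1}$, at which line~\ref{line:22} updates $\tau_{j^*\last}$ to $T_{k_1}'$; in particular $\tau_{j^*\last} \geq T_{k_1} > T^* \geq \tau_i^*$. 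Therefore, at $T_{k_2}$ one has $T_{k_2} - \tau_{j^*\last} < T_{k_2} - \tau_i^*$, so the lexicographic ordering of Section~\ref{goal-region} yields $\urge_i(T_{k_2}) > \urge_{j^*}(T_{k_2})$, contradicting that $j^*$ is elected (and is hence the unique maximizer) at $T_{k_2}$.

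The main obstacle I expect is the finite-time termination of each round: this depends on being able to invoke Lemma~\ref{convergence} cleanly, which in turn requires the leader $\ell$ and goal $R_{\ell g}$ to remain fixed between two consecutive calls to $\mathsf{init\_elect}$---precisely the behavior enforced by the algorithm. A minor interpretive subtlety is that line~\ref{line:22} must be read as setting $\tau_{i\last}$ to the current time $\curr$ (consistent with the definition of $\urge$ in Eq.~\eqref{eq:urge}), so that a leader's first-coordinate urge drops after providing its service and other agents' urges can later overtake it.
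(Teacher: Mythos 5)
Your proof is correct and follows essentially the same route as the paper's: both argue by contradiction, relying on the finite-time termination of each round (via Lemmas~\ref{connected} and~\ref{convergence}) and on the key fact that once an agent provides a service its urge drops permanently below that of the starved agent $i$, so that finiteness of $\N$ forces a contradiction. Your pigeonhole packaging (a twice-elected $j^*$ cannot be the urge-maximizer at its second win) is only a minor variant of the paper's induction over successive leaders, and your reading of line~\ref{line:22} as setting $\tau_{i\last} := \curr$ rather than $0$ is indeed the interpretation required for Eq.~\eqref{eq:urge} and the paper's own argument to make sense.
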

\begin{proof}
Proof is given by contradiction. Assume the following:
\vspace{-0.3cm}
\begin{assumption}
For all $t'\geq t$ there exists some $j \in \N$, such that $\urge_i(t') < \urge_j(t')$.
\label{assump:1}
\end{assumption}
Consider that $\ell \in \N$ is set as the leader at time $t$, and an agent $i' \in \N$ maximizes $\urge_{i'}(t)$ among all agents in $\N$. Then, from the construction of Alg.~\ref{alg:main} and Lemmas~\ref{connected} and~\ref{convergence}, there exists $\tau_{\ell\next} \geq t$ when the next leader's desired service $\sigma_{\ell\next}$ has been provided and a leader re-election is triggered with the time stamp $\tau_{\ell\next}$.  Note that from Eq.~\eqref{eq:urge}, $\urge_{i'}(\tau_{\ell\next})$ is still maximal among the agents in $\N$,  and hence $i'$ becomes the next leader. Furthermore, there exists time $\tau_{i'\next} \geq \tau_{\ell\next}$ when the next desired service $\sigma_{i' \next}$ of agent $i'$ has been provided, and hence $\urge_{i'}(\tau_{i'\next} ) < \urge_{j}(\tau_{i'\next} )$, for all $j \in \N$, including the agent $i$. 

Since we assume that $i$ does not become a leader for any $t' \geq t$ (Assump.~\ref{assump:1}), it holds that $\urge_{i'}({t}'') < \urge_{i}({t}'')$ for all ${t}'' \geq \tau_{i'\next} $. Inductively, we can reason similarly about the remaining agents. As they are only finite number of agents $N$, after large enough $T\geq t$, we obtain that $\urge_{j}(t' ) < \urge_{i}(t')$ for all $j$ and for all $t' \geq T$. This contradicts Assump.~\ref{assump:1} and hence the proof is complete.
\end{proof}

\medskip

From~Lemmas~\ref{connected},~\ref{convergence}, and \ref{lemma:leader}, the correctness of the high-level plan computation (proven~\eg in~\cite{principles}), and the construction of Alg.~\ref{alg:main}, we obtain, that $\phi_i$ is satisfied for all $i \in \N$.

From Alg.~\ref{alg:main}, each agent synthesizes its high-level plan $\Omega_i$ and waits for the first leader to be elected. 
Denote the first leader by $\ell_1\in \N$. 
Lemma~\ref{convergence} guarantees that there exists a finite time $\overline{t}_1>0$ that $x_{\ell_1}(\overline{t}_1)\in R_{\ell_1 \next}$, while at the same time Lemma~\ref{connected} ensures that the communication network $G(t)$ remains connected, $\forall t\in [0,\overline{t}_1]$.
At $\tau_{\ell_1\next} \geq \overline{t}_1$, {the first service $\sigma_{\ell_1 \next}$ of the leader's high-level plan is provided, defining a prefix of the agent $i$'s trace as $\TimeA(1) = \tau_{\ell_1\next}$, $\Act_i(\tau_{\ell_1\next}) = \pi_{i\next}$, $\TimeS(1) = \tau_{\ell_1\next}$, $\Ser_i(\tau_{\ell_1\next}) = \sigma_{i\next}$. Furthermore, $\mathbb{T}^A_j (1) = \tau_{\ell_1\next}$, $\Act_j(\tau_{\ell_1\next}) = \varpi_{ji\next}$.} 
Afterwards, a new leader $\ell_2\in \N$ is elected according to Alg.~\ref{alg:main} and $R_{\ell_2 \next}$ is set as the goal region. Now the controller from Eq.~(\ref{control2}) is switched to the case when $\ell_2$ is the leader. By induction, we obtain 
that for all $t \geq 0$ it holds
\begin{itemize}
\item Given a leader $\ell_t$ and a goal region $R_{\ell\next}$ at time $t$, there exists $\bar{t} \geq t$, when $x_\ell(\bar{t}) \in R_{\ell \next}$.
\item $G(t)$ is connected.
\end{itemize}
Together with Lemma~\ref{lemma:leader}, we conclude that $\phi_i$ is satisfied for all $i \in \N$.

\begin{corollary}
The proposed solution in Alg.~\ref{alg:main} is a solution to Problem~\ref{prob:main}.
\end{corollary}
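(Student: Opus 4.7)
The plan is to verify the two requirements of Problem~\ref{prob:main}: that Alg.~\ref{alg:main} produces a valid trace $\trace_i$ for each agent $i \in \N$, and that this trace satisfies the local task $\varphi_i$. The strategy is an induction over the sequence of elected leaders, combining Lemmas~\ref{connected}, \ref{convergence}, and~\ref{lemma:leader} with the correctness of the offline high-level plan $\Omega_i$ guaranteed by standard LTL-to-B\"uchi model checking.

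First I would establish validity of $\trace_i$ by induction on the sequence of elected leaders $\ell_1, \ell_2, \ldots$. At round $k$, suppose $\ell_k$ is elected at time $t_k$ and picks a goal region $R_{\ell_k g}$ with $\sigma_{\ell_k \next} \in L_{\ell_k}(R_{\ell_k g})$. Lemma~\ref{connected} ensures that $G(t)$ stays connected on $[t_k, t_{k+1}]$, so the $\mathsf{execute\_request}$ messages sent to the cooperators in $\C_{\ell_k \next}$ are reliably delivered. Lemma~\ref{convergence} then yields a finite time at which all agents, in particular $\ell_k$ and every $j \in \C_{\ell_k \next}$, are inside $R_{\ell_k g}$. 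The synchronised execution of $\pi_{\ell_k \next}$ at $\ell_k$ together with $\varpi_{j \ell_k \next}$ at each $j \in \C_{\ell_k \next}$ matches conditions (i)--(ii) of the Trace definition, so the $k$-th service provision appends a valid entry. Lines~\ref{line:update1} and~\ref{line:update2} record this entry in $\TimeA, \Act_i, \TimeS, \Ser_i$, preserving validity across rounds.

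Next I would argue LTL satisfaction. Fix an agent $i$. Lemma~\ref{lemma:leader} asserts that $i$ maximises its urge at arbitrarily large times; combined with the fact that exactly one service of the current leader's plan is provided between consecutive re-elections, this implies that $i$ is elected leader infinitely often. Whenever $i$ is the leader, its pointer $\sigma_{i\next}$ advances by one position along $\Omega_i$, with the wrap-around $\sigma_{i s_i + 1} := \sigma_{i p_i + 1}$ in the suffix. Consequently the sequence of services provided by $i$ equals $\Omega_i = \sigma_{i1}\ldots \sigma_{i p_i} (\sigma_{i p_i + 1} \ldots \sigma_{i s_i})^\omega$. Since $\Omega_i$ was extracted as a prefix-suffix accepting run of a B\"uchi automaton with $\Lang(\B) = \Lang(\varphi_i)$, it satisfies $\Omega_i \models \varphi_i$, and the LTL Satisfaction definition then gives $\trace_i \models \varphi_i$.

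The main obstacle I anticipate is the interleaving across rounds: one must ensure that the pointer $\sigma_{i\next}$ of agent $i$ advances in the correct order and is never skipped or repeated during rounds in which some other agent leads. This is exactly the property that Alg.~\ref{alg:main} enforces by updating $\sigma_{i\next}$ only on line~\ref{line:22}, which is reached only when $i$ itself is the leader; Lemma~\ref{lemma:leader} then rules out the pathological case where $i$'s pointer could stall forever. A subsidiary care-point is the bookkeeping of cooperating actions $\varpi_{i i' h'}$ triggered by other leaders, which is cleanly handled on line~\ref{line:update2} by appending only to $\TimeA$ and $\Act_i$ while leaving $\TimeS$ and $\Ser_i$ untouched, so these extra entries do not disturb the projected service sequence that realises $\Omega_i$.
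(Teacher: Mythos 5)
Your proposal is correct and follows essentially the same route as the paper: an induction over the sequence of elected leaders, invoking Lemma~\ref{connected} for connectivity, Lemma~\ref{convergence} for finite-time arrival at the goal region, Lemma~\ref{lemma:leader} for infinitely many leaderships per agent, and the correctness of the B\"uchi-automaton-based prefix-suffix plan $\Omega_i$ for LTL satisfaction. If anything, your write-up is more explicit than the paper's about trace validity (conditions (i)--(ii)) and about why the cooperating actions recorded on line~\ref{line:update2} do not perturb the projected service sequence.
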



\section{Example}
\label{sec:example}
In the following case study, we present an illustrative example of a team of four autonomous robots with heterogeneous functionalities and capacities. The proposed algorithms are implemented in Python 2.7. All simulations are carried out on a desktop computer (3.06 GHz Duo CPU and 8GB of RAM).

\subsection{System Description}

Denote by the four autonomous agents $\mathfrak{R}_1$, $\mathfrak{R}_2$, $\mathfrak{R}_3$ and $\mathfrak{R}_4$. They all satisfy the dynamics specified by~\eqref{dynamics}. They all have the communication radius $1.5m$, while $\varepsilon$ is chosen to be $0.1m$. The workspace of size $4m\times 4m$ is given in Figure~\ref{workspace}, within which the regions of interest for $\mathfrak{R}_1$ are $R_{11}$, $R_{12}$ (in red), for $\mathfrak{R}_2$ are $R_{21}$, $R_{22}$ (in green), for $\mathfrak{R}_3$ are $R_{31}$, $R_{32}$ (in blue) and for $\mathfrak{R}_4$ are $R_{41}$, $R_{42}$ (in cyan). 

Besides the motion among these regions, each agent can  provide various services as described in the following: 
agent $\mathfrak{R}_1$ can load ($l_H,l_A$), carry and unload ($u_H,u_A$) a heavy object \textsf{H} or a light object \textsf{A}. Besides, it can help agent $\mathfrak{R}_4$ to assemble ($h_C$) object \textsf{C} ; 
agent $\mathfrak{R}_2$ is capable of helping the agent $\mathfrak{R}_1$ to load the heavy object \textsf{H} ($h_H$), and to execute two simple tasks ($t_1$, $t_2$) without help from others; 
agent $\mathfrak{R}_3$ is capable of taking snapshots  ($s$) when being present in its own or others' goal regions; 
agent $\mathfrak{R}_4$ can assemble ($a_C$) object \textsf{C} under the help of agent $\mathfrak{R}_1$.



\subsection{Task Description}
Each agent within the team is locally-assigned complex tasks that require collaboration: 
agent $\mathfrak{R}_1$ has to periodically load the heavy object \textsf{H} at region $R_{11}$, unload it at region $R_{12}$, load the light object \textsf{A} at region $R_{12}$, unload it at region $R_{11}$. In LTL formula, it is specified as 
\begin{align*}
\phi_1 = & \ \Always \Event \big( (l_H\wedge h_H \wedge r_{11}) \wedge \Next (u_H\wedge r_{12})\big) \wedge \\ & \ \Always \Event \big( (l_A \wedge r_{12}) \wedge (u_A \wedge r_{11})\big);
\end{align*}
Agent $\mathfrak{R}_2$ has to service the simple task $t_1$ at region $R_{21}$ and task $t_2$ at region $R_{22}$ in sequence, but it requires $\mathfrak{R}_2$ to witness the execution of task $t_2$, by taking a snapshot at the moment of the execution. It is specified as 
\begin{align*}
\phi_2= \Event \big( ( t_1 \wedge r_{21}) \wedge \Event (t_2\wedge s \wedge r_{22})\big);
\end{align*}
Agent $\mathfrak{R}_3$ has to surveil over  both of its goal regions ($R_{31}$, $R_{32}$) and take snapshots there, which is specified as \begin{align*}
\phi_3=\Always \Event (s \wedge r_{31}) \wedge \Always \Event (s \wedge r_{32});
\end{align*}
Agent $\mathfrak{R}_4$ has to assemble object \textsf{C} at its goal regions ($R_{41}$, $R_{42}$) infinitely often, which is specified as \begin{align*}
\phi_4=\Always \Event (a_C \wedge r_{41}) \wedge \Always \Event (a_C \wedge r_{42}).
\end{align*}
Note that tasks $\phi_1$, $\phi_3$ and $\phi_4$ require the collaboration task be performed infinitely often. 

\subsection{Simulation Results}
Initially, the agents start evenly from the $x$-axis, i.e., $(0,\,0),\,(1.3,\,0), \, (2.6,\,0), \, (3.9,\,0)$. By Definition~\ref{edge}, the initial edge set is $E(0)=\{(1,\,2), \,(2,\,3),\,(3,\,4)\}$, yielding a connected $G(0)$.

The system is simulated for $35s$, of which the video demonstration can be viewed here~\cite{video}.
In particular, when the system starts, each agent synthesizes its local plan as described in Section~\ref{initial-plan}. After running the leader election scheme proposed in Section~\ref{goal-region}, agent $\mathfrak{R}_1$ is chosen as the leader. As a result, controller~\eqref{control} is applied for $\mathfrak{R}_1$ as the leader and the rest as followers, while the next goal region of $\mathfrak{R}_1$ is $R_{11}$. As shown by Theorem~\ref{convergence}, all agents belong to $R_{41}$ after $t=3.8s$. After that agent $\mathfrak{R}_2$ helps agent $\mathfrak{R}_1$ to load object \textsf{H}. Then agent $\mathfrak{R}_2$ is elected as the leader after collaboration is done, where $R_{21}$ is chosen as the next goal region. At $t=6.1s$, all agents converge to $R_{21}$. Afterwards, the leader and goal region is switched in the following order: 
$\mathfrak{R}_3$ as leader to region $R_{31}$ at $t=6.1s$; 
$\mathfrak{R}_4$ as leader to region $R_{41}$ at $t=8.1s$; 
$\mathfrak{R}_4$ as leader to region $R_{42}$ at $t=10.6s$; 
$\mathfrak{R}_2$ as leader to region $R_{22}$ at $t=14.2s$; 
$\mathfrak{R}_3$ as leader to region $R_{32}$ at $t=16.3s$; 
$\mathfrak{R}_1$ as leader to region $R_{12}$ at $t=18.2s$; 
$\mathfrak{R}_1$ as leader to region $R_{11}$ at $t=20.1s$; 
$\mathfrak{R}_3$ as leader to region $R_{31}$ at $t=24.2s$; 
$\mathfrak{R}_3$ as leader to region $R_{32}$ at $t=25.7s$; 
$\mathfrak{R}_4$ as leader to region $R_{41}$ at $t=28.1s$; 
$\mathfrak{R}_4$ as leader to region $R_{42}$ at $t=31.4s$. The above arguments are summarized in Table~\ref{tab:leader}.

Figure~\ref{workspace} shows the snapshot of the simulation at time $t=11.2s$, when agent $\mathfrak{R}_4$ was chosen as the leader and $R_{42}$ as the goal region.  Figure~\ref{full-traj} shows the trajectory of $\mathfrak{R}_1$, $\mathfrak{R}_2$,$\mathfrak{R}_3$, $\mathfrak{R}_4$ during time $[0, \, 34.7s]$, in red, green, blue, cyan respectively. Furthermore, the pairwise distance for neighbours within $E(0)$ is shown in Figure~\ref{full-traj}. It can be verified that they stay below the constrained radius $1.5m$ thus the agents remain connected.


\begin{figure}[t]
\begin{minipage}[t]{0.49\linewidth}
\centering
\includegraphics[height =1\textwidth]{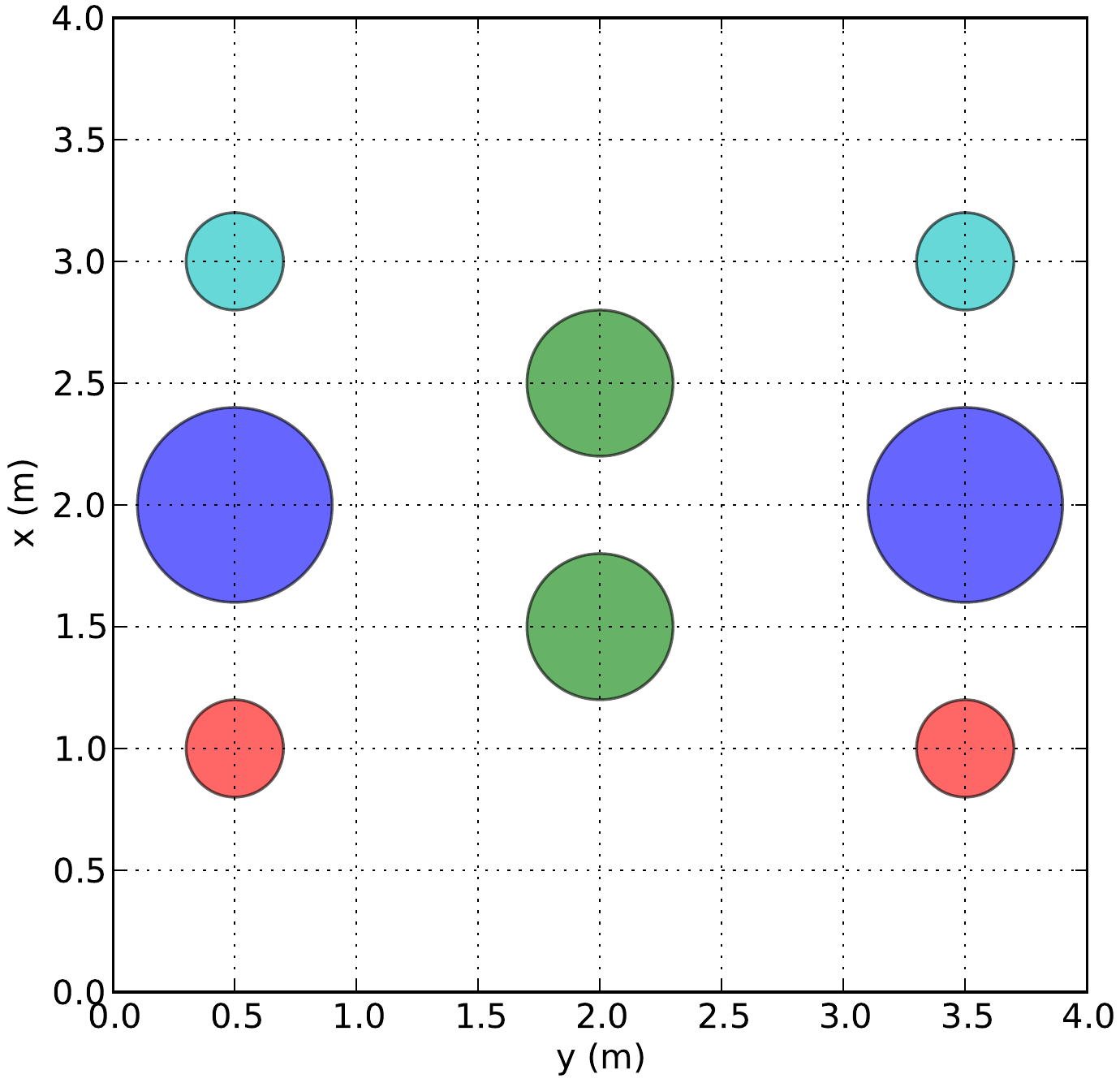}

  \end{minipage}
\begin{minipage}[t]{0.5\linewidth}
\centering
    \includegraphics[height =0.98\textwidth]{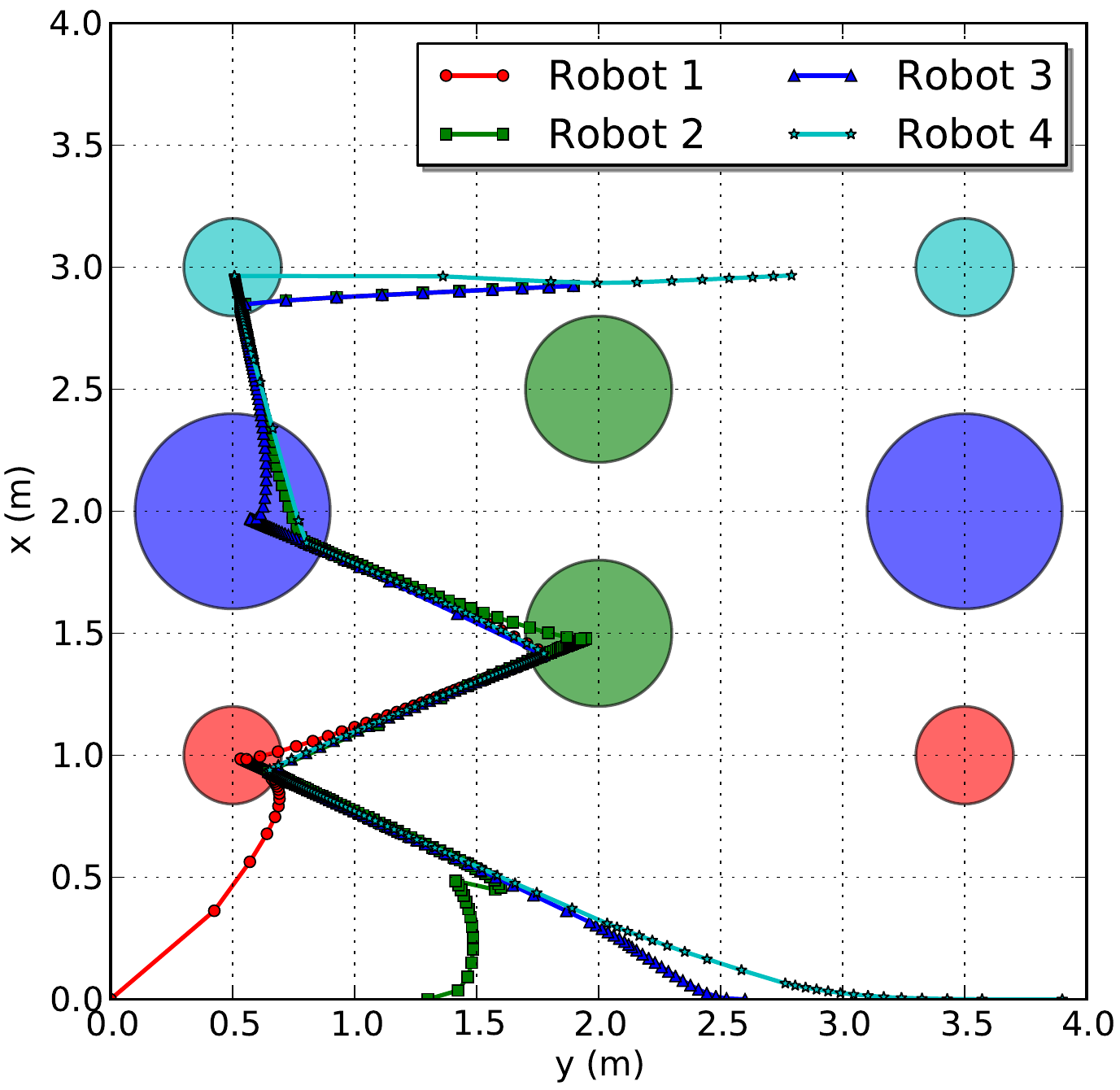}
\end{minipage}
\caption{Left: the workspace structure, where the goal regions for each agent are indicated by color; Right: snapshot of simulation at $t=11.2s$.}
\label{workspace}
\end{figure}

\begin{table}[t]
\centering

\scalebox{0.9}{

\centering
    \begin{tabular}{|c||c|c|c|c|}
        \hline  
        \textbf{Time} ($s$)  & $(0,3.8)$ &    $(3.8,6.1)$ & $(6.1,8.1)$ & $(8.1,10.6)$ \\        
        \hline
        Leader  & $\mathfrak{R}_1$ &    $\mathfrak{R}_2$ & $\mathfrak{R}_3$ & $\mathfrak{R}_4$\\
        \hline
        Goal Region  &$R_{11}$ &    $R_{21}$ & $R_{31}$ & $R_{41}$\\
        \hhline{|=||=|=|=|=|}
        \textbf{Time} ($s$)  & $(10.6,14.2)$ &    $(14.2,16.3)$ & $(16.3,18.2)$ & $(18.2,20.1)$ \\        
        \hline
        Leader  & $\mathfrak{R}_4$ &    $\mathfrak{R}_2$ & $\mathfrak{R}_3$ & $\mathfrak{R}_1$\\
        \hline
        Goal Region  &$R_{42}$ &    $R_{22}$ & $R_{32}$ & $R_{12}$\\
        \hhline{|=||=|=|=|=|}
        \textbf{Time} ($s$)  & $(20.1,24.2)$ &    $(24.2,25.7)$ & $(25.7,28.1)$ & $(28.1,31.4)$ \\        
        \hline
        Leader  & $\mathfrak{R}_1$ &    $\mathfrak{R}_3$ & $\mathfrak{R}_3$ & $\mathfrak{R}_4$\\
        \hline
        Goal Region  &$R_{11}$ &    $R_{31}$ & $R_{32}$ & $R_{41}$\\
        \hline
    \end{tabular}
  }
\caption{Leader Eelection Scheme}
\label{tab:leader}  
\end{table}

\begin{figure}[t]
\begin{minipage}[t]{0.49\linewidth}
\centering
\includegraphics[height =1\textwidth]{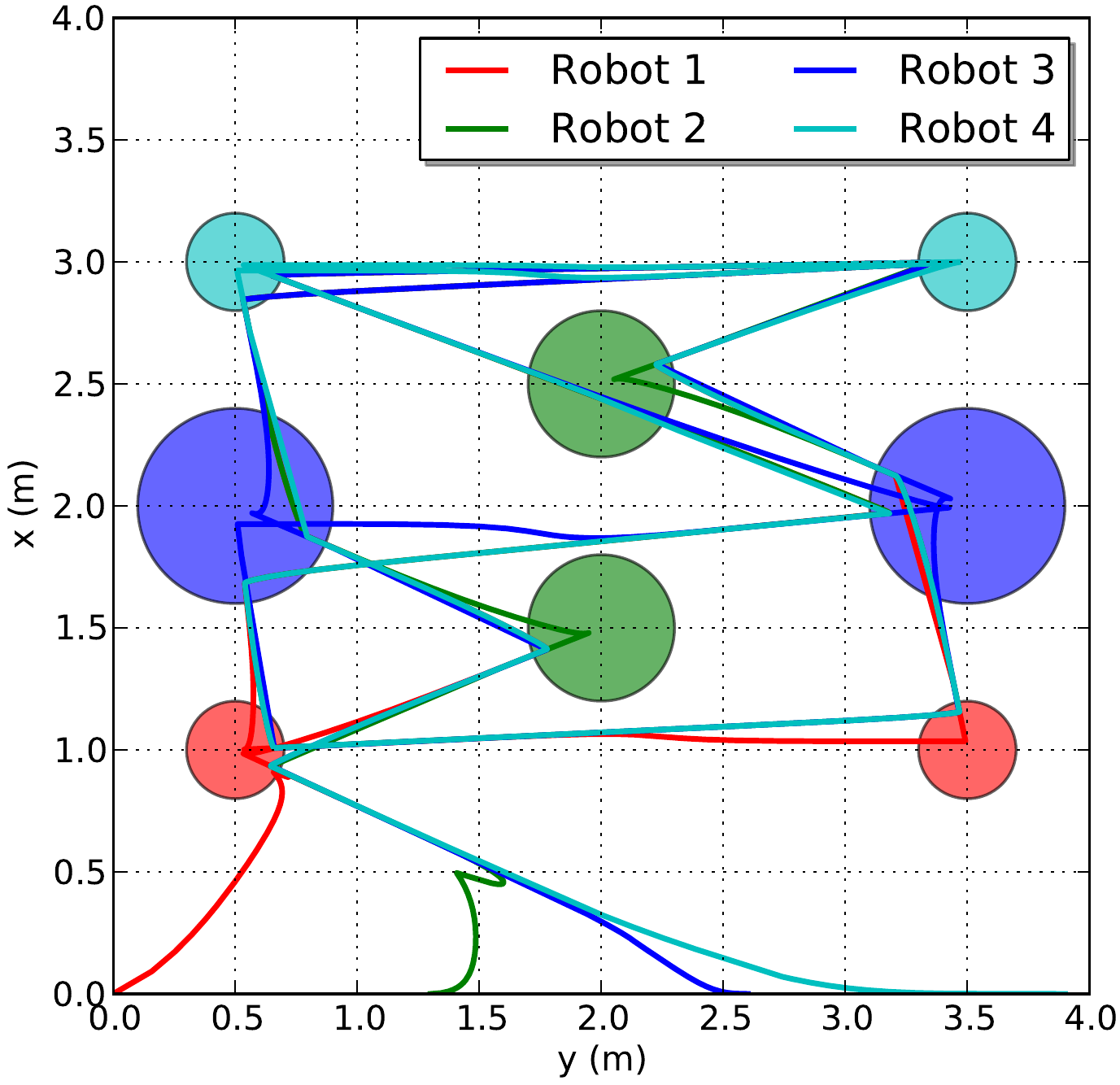}

  \end{minipage}
\begin{minipage}[t]{0.5\linewidth}
\centering
    \includegraphics[height =0.98\textwidth, width =0.98\textwidth]{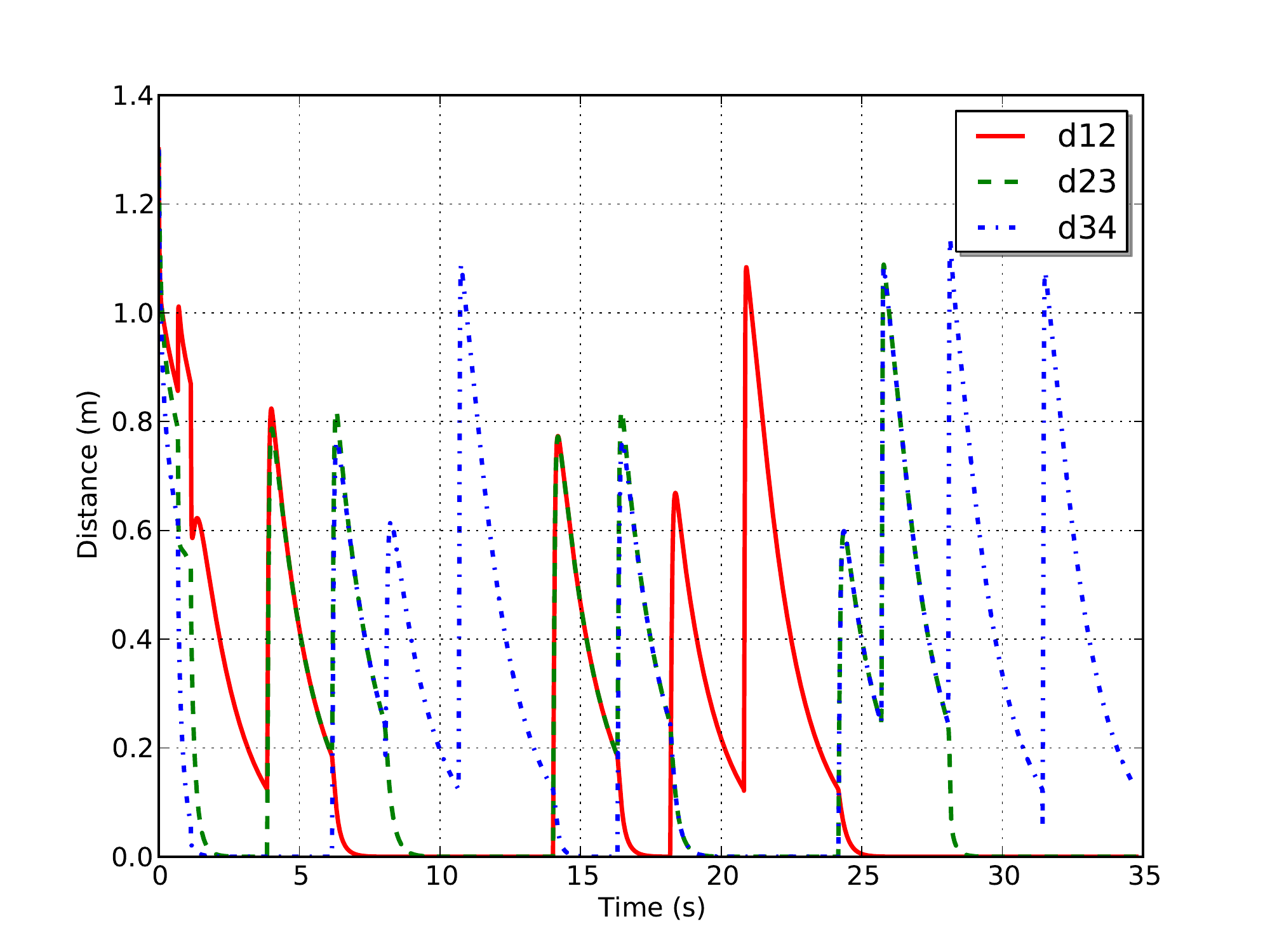}
\end{minipage}
\caption{Left: the agents' trajectory during time $[0, \, 34.8s]$; Right: the evolution of pair-wise distances $\|x_{12}\|,\|x_{23}\|, \|x_{34}\|$, which all stay below the communication radius $1.5m$ as required by the connectivity constraints.}
\label{full-traj}
\end{figure}

\section{Conclusions and Future Work}
\label{sec:conc}
We present a distributed motion and task control framework for multi-agent systems under complex local LTL tasks and connectivity constraints. It is guaranteed that all individual tasks including both local and collaborative services are fulfilled, while at the same time connectivity constraints are satisfied. Further work includes inherently-coupled dynamics and time-varying network topology. 

\bibliographystyle{plain}
\bibliography{refer}

\end{document}